\documentclass[twocolumn]{article}

\usepackage[top=0.75in,left=0.75in,right=0.75in,bottom=1in]{geometry}
\setlength{\columnsep}{0.33in}

\usepackage{amsmath}
\usepackage{amssymb}
\usepackage{amsthm}
\usepackage{epigraph}
\usepackage[utf8]{inputenc}
\usepackage{afterpage}
\usepackage{tikz}
\usepackage{soul}
\usepackage[framemethod=tikz]{mdframed}
\usetikzlibrary{arrows,chains,decorations.pathreplacing}

\newcommand{\beq}{\begin{eqnarray}}
\newcommand{\eeq}{\end{eqnarray}}
\newcommand{\bhlpar}{\begin{mdframed}[innerleftmargin=0,innerrightmargin=0,hidealllines=true,backgroundcolor=yellow!20]}
\newcommand{\ehlpar}{\end{mdframed}}
\DeclareMathOperator{\wbf}{wBF}
\newtheorem{theorem}{Theorem}


\begin{document}

\title{Towards Attack-Tolerant Networks: Concurrent Multipath Routing and the Butterfly Network}
\author{
Edward L. Platt \\
University of Michigan\\
elplatt@umich.edu
\and
Daniel M. Romero \\
University of Michigan\\
drom@umich.edu
}
\date{}
\maketitle
\begin{abstract}
Targeted attacks against network infrastructure are notoriously difficult
to guard against.
In the case of communication networks,
such attacks can leave users vulnerable to
censorship and surveillance,
even when cryptography is used.
Much of the existing work on network fault-tolerance focuses on random faults and
does not apply to adversarial faults (attacks).
Centralized networks have single points of failure by definition,
leading to a growing popularity in decentralized architectures and protocols
for greater fault-tolerance.
However, centralized network {\em structure} can arise even when {\em protocols} are
decentralized.
Despite their decentralized protocols,
the Internet and World-Wide Web have
been shown both theoretically and historically to be highly susceptible to
attack, in part due to emergent structural centralization.
When single points of failure exist,
they are potentially vulnerable to non-technological
(i.e., coercive) attacks,
suggesting the importance of a structural approach to attack-tolerance.
We show how the assumption of partial trust transitivity,
while more realistic than the assumption underlying webs of trust,
can be used to quantify the effective redundancy of a network
as a function of trust transitivity.
We also prove that the effective redundancy of the wrap-around butterfly topology
increases exponentially with trust transitivity and
describe a novel concurrent multipath routing algorithm for constructing
paths to utilize that redundancy.
When portions of network structure can be dictated
our results can be used to create scalable,
attack-tolerant infrastructures.
More generally, our results provide a theoretical formalism for evaluating the effects
of network structure on adversarial fault-tolerance.
\end{abstract}

\noindent{\bf Keywords:}
Butterfly Topology,
Fault Tolerance,
Adversarial Faults,
Multipath Routing,
Censorship,
Decentralization.

\section{Introduction}

\epigraph{The Net interprets censorship as damage and routes around it.}{\textit{--John Gilmore} \cite{elmer-dewitt_first_1993}}

Much of the world's infrastructure is networked: power grids,
cellular networks, roads, and of course, the Internet.
As with any critical infrastructure, the cost of failures can be
immense, so methods for tolerating various kind of faults within networks are an
important and ongoing area of research
\cite{zin_survey_2015,albert_error_2000,sterbenz_resilience_2010}.
{\em Adversarial faults},
those in which an adversary can target attacks strategically,
deserve special attention.
Such attacks are both extremely difficult to guard against and 
often have important social implications.
In particular, censorship and surveillance are often achieved
by targeting central network locations and either blocking or capturing
the information flowing through them.
The Internet's decentralized design was motivated
by the need to withstand targeted attacks, such as nuclear strikes
\cite{baran_distributed_1964}.
But despite longstanding common wisdom \cite{elmer-dewitt_first_1993},
both theoretical results and recent events (described below)
have demonstrated that the Internet can be surprisingly vulnerable to attack.
Decentralization remains a promising approach towards
building resilient networks,
but there is a need to better understand the relationship between
decentralized network structure and adversarial fault tolerance.

The Internet's vulnerability to censorship and other targeted attacks
has been demonstrated by several recent events.
In 2008, YouTube suffered a worldwide outage for several hours
when a service provider in Pakistan advertised false routing information
\cite{hunter_pakistan_2008}.
The action (known as a {\em black hole attack}) was intended to censor YouTube
within Pakistan only, but resulted in a worldwide cascading failure.
The action was initiated by government order in Pakistan,
and spread beyond Pakistan when a router misconfiguration allowed the false
routing information to propagate globally.
While the government order and router misconfiguration initiated the outage,
it was the structure of the Internet's router network that allowed a fault in a
single router to propagate globally.
And while the action was not an intentional attack against the global Internet,
the ability of an attacker to succeed without even trying only highlights
the Internet's vulnerability to adversarial faults.

In 2013, the Texas-based email provider Lavabit was ordered to disclose
their private SSL keys to the FBI \cite{poulsen_edward_2013}.
Rather than complying,
Lavabit ceased operations
in order to protect their users from surveillance.
Once again, the attack was successful due to a highly centralized
architecture:
SSL keys under control of a single entity, in a single legal jurisdiction.
While originally intended as surveillance,
this action effectively became an act of censorship.
It is also important to note that while Lavabit's cryptography worked as intended,
the attack was still successful because the system was
vulnerable to non-technical coercion.
So we see that such vulnerabilities are not limited to any one system or protocol,
but result from centralized structure itself.

Analysis of the Internet's router network has shown that while it
is remarkably resilient against random faults,
it is highly susceptible to adversarial faults \cite{albert_error_2000}.
These results have been attributed to the scale-free structure of the Internet's
router network
\cite{barabasi_emergence_1999,barabasi_scale-free_2009}.
In scale-free networks and other networks with heavy-tail degree distributions,
random failures are highly likely to
affect only low-degree nodes, thus having
little effect.
However, Adversarial faults target the few high-degree nodes,
and therefore remove a large number of edges with each fault.
So while the {\em protocols} of the Internet are decentralized,
the {\em network structure} is somewhat centralized. 
In other words, the protocols of the Internet do not {\em require}
centralization, but centralization may still emerge from the sociotechnical
processes that create its network structure.

With strong theoretical and historical evidence that centralized
network structure can create vulnerabilities,
methods for analyzing structural vulnerabilities and for designing
fault-tolerant networks are needed.
This paper presents several contributions towards advancing those goals.
While our motivation comes from the Internet router network and
World Wide Web,
our work is theoretical, focusing on abstract networks,
and could potentially be applied to many different types of networks,
whether made of physical wires, virtual tunnels,
or other types of links.

We consider a setting in which a source node (Alice)
in a network attempts to route a message
to a target node (Bob) by forwarding it through the links of the network.
A ``link'' in this context could represent any kind of connection
(e.g., physical cables, encrypted channels).
We assume that some nodes in the network may be compromised by an attacker
(Mal).
We also assume that Mal is an adversary of Alice specifically and targets
nodes strategically with the goal of interfering with Alice's communications
(rather than disrupting the network as a whole).
This assumption applies to scalable network architectures that can be made
large enough that an attacker must focus their resources in order to have
a significant effect.
Compromised nodes may behave incorrectly by blocking, altering,
or incorrectly routing messages.
We assume that Mal has full knowledge of the network structure, but has
limited resources and thus can only compromise a fixed number of nodes.

We also assume that nodes {\em trust} their immediate neighbors.
So Mal is unable to compromise Alice's node, or her direct neighbors.
In the commonly used {\em web of trust} approach
\cite{zimmermann_official_1995,ferguson_practical_2003},
we would extend that trust transitively to the entire network.
However, we make a weaker and more realistic assumption:
that trust is extended transitively to nodes within a fixed
number of hops.
So only the beginning and end of a path between Alice and Bob is trusted
against interference from Mal.
We call this assumption
{\em partial trust transitivity},
and refer to such paths as {\em partially trusted}.

Under the above assumptions,
we show how to evaluate the influence of network structure
on attack-tolerance,
how to use local trust and redundancy to achieve greater attack-tolerance
when no single path is fully trusted,
and propose a novel routing algorithm for constructing such paths on
the butterfly network topology.
The butterfly topology is popular in parallel processing
\cite{kshemkalyani_distributed_2008} and
peer-to-peer \cite{lua_survey_2005, korzun_structured_2013}
applications, due to its regular structure, low degree, and high connectivity.

It is important to note that the butterfly is a highly structured and constrained
network topology,
very different form those found in social networks and other
self-organized networks.
The reader may wonder whether it is realistic or useful to assume such control over
the network structure.
We have already seen that whenever a single point of failure exists in the network,
there is a potential for an attacker to exploit it through coercion,
without needing to compromise the technology.
So, {\em attack-tolerance cannot be achieved without the ability to influence
network structure}.
Luckily, there are scenarios in which network topology can be dictated.
Examples include overlay networks
\cite{lua_survey_2005, korzun_structured_2013},
formal organizations \cite{mohr_explaining_1982},
government-regulated cellular networks \cite{walker_mass_2012},
and call tree notification systems \cite{nickerson_thinking_2010}.
In general,
{\em when the need for attack-tolerance is high enough to warrant investment
in infrastructure, networks can be engineered and maintained as infrastructure}.
It is also worth noting that attack-tolerant networks may be sub-components of
larger, less-constrained systems.
For example, a single server might be replaced by a distributed network of servers,
each with different ownership, physical location, and legal jurisdiction,
without placing any unrealistic constraints on the clients connecting to
those servers.
Additionally, there may be ways to achieve improved attack-tolerance from architectures
more flexible than the butterfly, which is a potential area for future work.

We begin by describing how fault tolerance techniques can be adapted and
evaluated in a network setting with partial trust transitivity and
adversarial faults.
Generally, faults in network paths can be correlated,
preventing the application of standard fault tolerance techniques
\cite{avizienis_basic_2004, von_neumann_probabilistic_1956},
which assume independent faults.
By constructing {\em independent paths},
which have no untrusted nodes in common,
we show how to model communication across a complex network in the presence
of correlated adversarial faults as communication across redundant
simple channels with random errors.
Redundant messages can be sent across these channels in parallel,
a technique known as {\em concurrent multipath routing}
\cite{zin_survey_2015, qadir_exploiting_2015, khiani_comparative_2013},
and used for fault tolerance.
The receiver can then use the redundant messages to detect and/or correct
errors.
We formally evaluate the
effects of network structure on attack-tolerance and show that the probability
of an undetected error decreases exponentially with the number of
independent paths between source and destination,
even when no individual path is entirely trusted.

We also propose a novel concurrent multipath routing algorithm for the butterfly
topology.
The algorithm constructs independent paths,
which can be combined with the fault-tolerant concurrent multipath routing
scheme above to
achieve a high level of adversarial fault tolerance on the butterfly topology.

Our main contributions are:
\begin{itemize}
\item{
We propose a novel method for extending standard fault tolerance techniques to
{\em adversarial} faults in {\em complex networks}.
We do so by modeling redundant independent paths
with partial trust transitivity as a single virtual channel,
and show that the probability of detecting adversarial faults
approaches 1 exponentially with the number of paths;
}
\item{We prove that the number of independent paths between two nodes
in a wrap-around butterfly network with partial trust transitivity
increases exponentially with the trust radius;
}
\item{We present a scalable, efficient, and attack-tolerant concurrent
multipath routing algorithm on the butterfly network topology.}
\end{itemize}

This paper is organized as follows.
Section \ref{sec-related} reviews background and related work.
Section \ref{sec-ft} describes adversarial fault tolerance on
structured networks.
Section \ref{sec-butterfly} describes the concurrent multipath routing
algorithm for the butterfly network topology
Section \ref{sec-discussion} discusses the results.
And Section \ref{sec-conclusion} concludes.

\section{Background and Related Work}
\label{sec-related}

There has been considerable work on trust in network security.
Both centralized and decentralized approaches are commonly used to create
trust infrastructures.
Centralized approaches such as {\em public key infrastructure} (PKI)
suffer from a number of vulnerabilities
\cite{ellison_ten_2000},
which stem largely from the single points of failure inherent to
centralization.
The well-known and widely-used {\em web of trust} approach
\cite{zimmermann_official_1995,ferguson_practical_2003}
is a decentralized alternative.
In a web of trust,
individuals choose who they trust initially.
Trust is then extended to new individuals if they are vouched for by a
currently-trusted individual,
making it possible to quickly establish a large group of trusted nodes.
However, web of trust's assumption of infinite transitivity is unrealistic
\cite{christianson_why_1997},
and does not distinguish between paths of different lengths.
Our work addresses both of these limitations by incorporating a more realistic
assumption of partial transitivity.

Previous work on incorporating network structure into
trust models has focused on authentication
protocols, showing that independent paths can reduce an adversary's ability
to impersonate a target
\cite{levien_attack-resistant_2009}.
Other work has shown that identifying independent paths in arbitrary networks
is NP-hard and provided approximation algorithms
\cite{reiter_resilient_1998}.
Our work complements these by introducing the partial trust assumption
extending the focus beyond authentication.
When network topology can be controlled, we sidestep the NP-hard problem of finding
independent paths on arbitrary networks by using the mathematical structure of
the butterfly topology to construct provably independent paths.

Many distributed consensus protocols (such as those used by cryptocurrencies)
are designed to tolerate arbitrary or adversarial faults.
Byzantine agreement protocols
\cite{lamport_byzantine_1982,castro_practical_1999}
provide tolerance against arbitrary faults (including attacks) under
some circumstances, but are limited to small networks due to poor scalability.
Proof-of-work \cite{dwork_pricing_1993,nakamoto_bitcoin:_2008}
and proof-of-stake \cite{king_ppcoin:_2012}
provide better scalability,
but are wasteful of computational and energy resources.
Federated Byzantine Agreement (FBA) \cite{mazieres_stellar_2015}
is scalable, allows for flexible trust,
and is highly fault-tolerant on networks meeting a set of requirements.
However, FBA does not provide a method for evaluating the
fault tolerance properties of different network structures
or for calculating the failure probabilities within a particular network.

There are relatively few attack-tolerance schemes
that focus on network structure,
compared to more popular cryptographic approaches
\cite{ferguson_practical_2003}.
All existing attack-tolerant networks we are aware of are content-addressable
networks: data is routed to and from storage nodes rather than between sender
and receiver.
Fiat and Saia described a scheme that combines the butterfly topology
with expander graphs to create a highly censorship-resistant,
content-addressable network \cite{fiat_censorship_2002},
although this scheme does not scale well and is impractical due to a
high level of data replication.
Perhaps the most mature structural solution is the Freenet collaboration
\cite{clarke_freenet:_2001}.
Freenet uses secret sharing
\cite{shamir_how_1979, blakley_safeguarding_1979}
and small-world routing
\cite{zhang_using_2002,kleinberg_small-world_2000}
to create a content-addressable network with a high level of both
confidentiality and censorship resistance.
Freenet guarantees that data is stored redundantly,
but still allows for centralized network structure,
and thus single points of failure,
as data travels from its origin to the redundant storage locations.
Unlike the above content-addressable networks, our proposal is purely network based
and does not require nodes to store data indefinitely.
Our proposal also improves on the scalablity of Fiat and Saia's work,
and does not rely on assumptions about existing social network structure.

{\em Multipath routing} protocols identify multiple paths between
source and destination
in contrast to traditional {\em unipath} routing, which uses
a single path.
The special case of {\em concurrent} multipath routing uses mutliple paths
simultaneously.
Multipath routing has many applications, including reduced congestion,
increased throughput, and more reliability
\cite{qadir_exploiting_2015}.
Many of these routing protocols offer increased confidentiality
\cite{zin_survey_2015}.
Some approaches utilize redundant paths as backups for increased
fault tolerance
\cite{alrajeh_secure_2013},
and some specifically protect against adversarial faults
\cite{kohno_improvement_2012, khalil_unmask:_2010, lou_h-spread:_2006}.
Most work on multipath routing has been motivated by applications related to
wireless sensor networks (WSNs),
and have thus focused on ad hoc, unstructured networks, often having a central
base station.
The method of Liu et al.
\cite{liu_secure_2012}
routes multiple messages first to random peers and then
to a central base station,
with the network edges constrained by sensors' physical location.
We have found only few examples in the existing literature of applications of
concurrent multipath routing to {\em adversarial} fault tolerance,
and all have focused on ad-hoc wireless sensor networks, without attention
to the role of network structure.
The alogorithm we present for the butterfly topology complements existing work
by addressing cases where links are not constrained by physical
distance,
and where network structure can be engineered for greater attack-tolerance.

Our proposed routing algorithm makes use of a
{\em structured network}, in which link structure is predetermined.
Structured networks have been a popular tool in parallel processing
architectures \cite{kshemkalyani_distributed_2008}.
More recently, peer-to-peer systems based on distributed hash tables have used
structured {\em overlay networks} to map table keys to local TCP/IP routes
\cite{lua_survey_2005,korzun_structured_2013}.
Such networks can be designed to have favorable structural and routing
properties,
which can be used to to improve attack-tolerance.

\section{Trust Networks and Fault Tolerance}
\label{sec-ft}

Within the field of {\em fault tolerance},
many techniques have been developed for building reliable systems
out of unreliable components
\cite{avizienis_basic_2004, von_neumann_probabilistic_1956}.
We will make use of standard fault tolerance terminology, summarized here.
A {\em fault} is said to occur when one component
of a system behaves incorrectly (e.g., a routing node blocks or
alters a message).
The result of that fault (e.g., a recipient receiving conflicting messages)
is called an {\em error} state.
If the error is undetected or corrected to the wrong value, the system is
said to have experienced a {\em failure} (e.g., an altered message is
accepted as authentic).
Note that when an error is detected but cannot be corrected,
the system has still tolerated the fault because it has not accepted an error
state.
We are concerned in particular with {\em adversarial faults},
which are chosen strategically to maximize the likelihood of a failure.

\subsection{Partial Trust Model}

A central question in large-scale, secure communication is this:
how can two parties communicate reliably and securely
when no direct trusted link exists between them?
The commonly-used web of trust approach
\cite{zimmermann_official_1995,ferguson_practical_2003}
extends trust infinitely transitively:
to friends of friends, and friends of friends of friends, and so on.
However, the assumption of infinitely transitive trust is unrealistic
\cite{christianson_why_1997},
and does not allow for the analysis of the effects of network structure.

An alternative assumption might be that each hop away from Alice in
in the network
reduces the probability that a node can resist compromise exponentially.
Such a situation could occur if nodes more distant from Alice are
more favorably disposed to Alice's adversary, more likely to cooperate with that
adversary, or less likely to take proactive security measures against that
adversary.
The above model can be further
approximated by assuming that nodes up to some fixed number
of hops cannot be compromised, and that those beyond can.
This simplified version is still more realistic than infinite transitivity
and will be convenient for proving our results.
We now proceed to define our model formally.

We define the {\em partial trust model} on
an undirected graph $G = (V,E)$,
although the model can easily be extended to directed multigraphs.
Vertices representing commiunicating agents,
and with edges representing mutually trusted communication links.
Let $v \in V$ be an arbitrary sender (Alice)
and $w \in V$ be an arbitrary receiver (Bob).
We assume the presence of an adversary (Mal) who knows the
full structure of the network,
and who can compromise a fixed number of nodes,
gaining complete control of their behavior.
We also assume that Mal is an adversary of Alice and/or Bob specifically,
rather than the network as a whole.
So Mal can compromise any node except for those trusted by Alice or Bob.
We define a {\em trust radius} $h$ such that nodes $u$ and
$u^\prime$ trust each other if their distance is less than $h$.
For a given node $u$,
we call the set of trusted nodes its
{\em trusted neighborhood} $T_h(u)$,
and all nodes at exactly distance $h$ the
{\em trust boundary} $B_h(u)$:
\beq
T_h(u) &=& \left\{ u^\prime \mid d(u,u^\prime) < h \right\} \\
B_h(u) &=& \left\{ u^\prime \mid d(u,u^\prime) = h \right\}.
\eeq
The trust boundary $B_h$ plays an important role because these nodes are not
trusted by $u$,
and if compromised can entirely isolate $u$ from the rest of the network.
These trust assumptions imply that when Alice sends a message to Bob,
Mal can only cause faults in the set of nodes outside both of their trusted
neighborhoods: $V \setminus \left(T_h(v) \cup T_h(w)\right)$.
We refer to this set of nodes as the {\em untrusted region}.

\subsection{Effective Redundancy}

Our goal is to achieve fault tolerance through redundancy.
In the network setting, redundnacy is achieved using
{\em independent paths} \cite{reiter_resilient_1998},
which have no common points of failure.
Typically, it is assumed that paths must be disjoint in order to be
independent.
However, under the partial trust assumption, two non-disjoint paths
can still be independent as long as their intersection contains only trusted
nodes,
greatly increasing the level of redundancy available.
Under the partial trust assumption, the available redundancy thus depends on
both the network structure and the level of trust.

We now quantify the {\em effective redundancy} between Alice and Bob
when trust radius $h$ is assumed.
This quantity, $\delta_{v,w,h}$ is exactly the max-flow/min-cut of
the graph after Alice's and Bob's trusted neighborhoods have been
collapsed into single source/sink vertices.
Each trust boundary forms a cut of the network and places an upper bound on the
min-cut:
\beq
\delta_{v,w,h} \leq \min\left( \mid B_h(v) \mid, \mid B_h(w) \mid \right).
\eeq
Equality holds when there are no bottlenecks within the untrusted region,
an indication that the network is decentralized.
The efrective redundancy of the entire graph can be characterized by the minimum over
all vertex pairs:
\beq
\delta_h(G) \equiv \min_{v,w \in V} \delta_{v,w,h}.
\eeq
Thus, for any pair of nodes in the network, at least $\delta_h$ independent,
redundant paths can be constructed between them.
$\delta_h$ is a purely structural network property,
and places an upper bound on the effectiveness of any
redundancy-based fault tolerance scheme.
The more quickly $\delta_h$ grows with $h$,
the better a network is at leveraging trust transitivity to create redundancy.
Thus, the scaling of $\delta_h$ can be used to quantify a network's ability
to withstand targeted attacks,
even when the exact trust radius $h$ is unknown.

\begin{figure}
\centerline{\includegraphics[width=3in,height=2.08in]{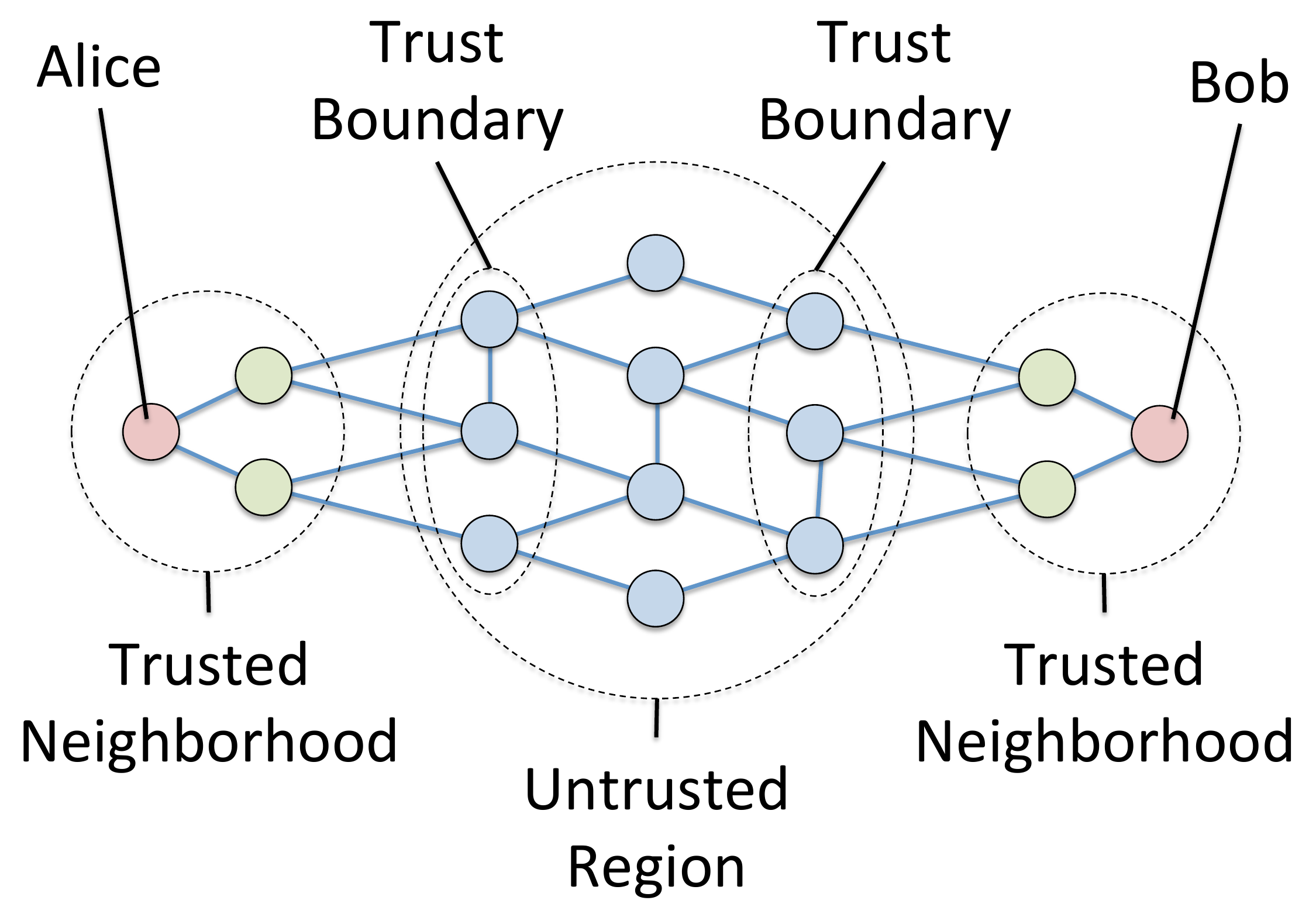}}
\caption{
Illustration of a trusted communication network and the network properties
used by the {\em partial trust model}.
Edges represent mutually trusted communication links.
The sender (Alice, $v$) and receiver (Bob, $w$) trust all nodes
less than the {\em trust radius} $h$ hops away.
These nodes form their {\em trusted neighborhoods} $T_h(v)$ and $T_h(w)$.
We assume that all faults occur in the remaining nodes: the
{\em untrusted region}.
The untrusted nodes in contact with the trusted neighborhoods for the
{\em trust boundaries} $B_h(v)$ and $B_h(w)$,
which (in the absence of central bottlenecks) determine the
{\em effective redundancy} $\delta_h$ provided by the network.
Alice and Bob can be modelled as connected by a direct link with
at least $\delta_h$ redundant channels.
}
\label{fig:trust-source}
\end{figure}

\subsection{Multipath Fault Tolerance}

Once we have determined a network's effective redundancy,
we can apply redundancy-based fault tolerance techniques,
by sending multiple copies of a message
({\em concurrent multipath routing}).
Having found the effective redundancy between two nodes,
we can simplify our model, replacing independent paths through the
complex network with direct channels between the endpoints.
We model our sender (Alice) and receiver (Bob) as
communicating over $\delta_h$ direct and redundant virtual channels.
The partial trust model allows us to make this simplifying assumption
for analyzing a fault tolerance scheme,
but implementing such a scheme will require a method for constructing
specific network paths.
We will return to the question of constructing paths in the next section.
For now, we concern ourselves with the question:
given that the network provides $\delta_h$ redundant channels between
Alice and Bob,
what is the probability that an adversary (Mal) causes an undetectable
error after inducing a fixed number of faults?

Let us first consider the scenario in which
Alice sends a message copy over each available channel.
We can also assume that each message includes the number of messages sent,
the full list of channels used, etc.,
making that information available to Bob.
When Bob receives the messages, there are several possibilities.
If some of the messages are missing
or if some of the messages disagree,
Bob knows that some of the messages were either blocked or altered,
and he has successfully tolerated the fault(s).
Bob can then take any of several actions:
1. request retransmission;
2. end receipts so Alice knows which paths have been compromised;
or 3. attempt error correction using majority voting.
If instead, Bob finds that all the messages are present and agree,
there are two possible cases.
The first case is that Mal has not compromised any of the messages,
and Bob has correctly accepted them, so no failure has occurred.
The second case is that Mal has compromised {\em all} of the messages,
so Bob has accepted an erroneous message and a failure has occurred.
In the present scenario,
whether a failure occurs depends only on whether Mal has the resources to
compromise all of the channels.
In a more realistic scenario,
both Alice and Mal have limited resources and are not able to use or
compromise all available channels.

In a more sophisticated multipath fault tolerance scheme,
Alice randomly chooses $k \leq \delta_h$ channels and sends a copy of
her message on each.
We assume that Mal is capable of compromising $l \leq \delta_h$ channels.
Since Alice chooses channels randomly,
all channels are equally likely to contain a message,
so Mal can do no better than also choosing randomly.
We can also return to the full network setting by noting that each of the
$\delta_h$ independent paths in the network can serve as independent channels
between Alice and Bob.
Mal's best strategy is now to identify a minimum node cut in the network
and randomly compromise nodes from that cut.
With this strategy, each compromised node reduces effective redundancy by one,
equivalent to compromising one of the channels between Alice and Bob.
If $k > l$, at least one message will get through uncompromised and all
errors are detectable.
Otherwise, the probability of Mal producing an undetectable error is
the probability that all of Alice's chosen channels are compromised:
\beq
\label{eq:pf}
p_f &=& \frac{l!(\delta_h-k)!}{\delta_h!(l-k)!}.
\eeq
Letting $k=\alpha \delta_h$ and $l=\beta \delta_h$, then applying Stirling's
approximation gives:
\begin{eqnarray}
\label{eq:pf_approx}
p_f &\approx&
\frac{\sqrt{\beta(1-\alpha)}}{\sqrt{\beta-\alpha}}
\left[
    \left( \frac{\beta-\alpha}{1-\alpha} \right)^{\alpha}
    \left( \frac{\beta}{\beta-\alpha} \right)^{\beta}
    (1-\alpha)
\right]^{\delta_h}.
\end{eqnarray}

Figure \ref{fig:pfail} shows the value of $p_f$
as a function of $k$ and $l$.
Equation (\ref{eq:pf_approx}) shows that while $p_f$
depends on the fractions of
redundant paths actually utilized $\alpha$ and compromised $\beta$,
it decreases exponentially with the effective redundancy $\delta_h$
(which we will later see increases exponentially in $h$ in the
butterfly topology).
This result is significant because $\delta_h$
depends only on the network structure
and the strength of trust transitivity.
{\em Thus, the scheme can be effective, even when the number of copies
sent $k$ is a small fraction of the effective redundancy}.
In other words, this scheme exhibits a {\em stabilizing asymmetry}:
senders can tolerate attacks from significantly more powerful
adversaries,
as long as the network structure provides large $\delta_h$.

In order to derive the above results,
we have assumed that Alice and all
intermediary agents are able to identify specific,
independent network paths that achieve the effective redundancy $\delta_h$.
We now proceed to describe a routing algorithm for doing so in the special case
of the butterfly network topology.

\begin{figure}
\centerline{\includegraphics{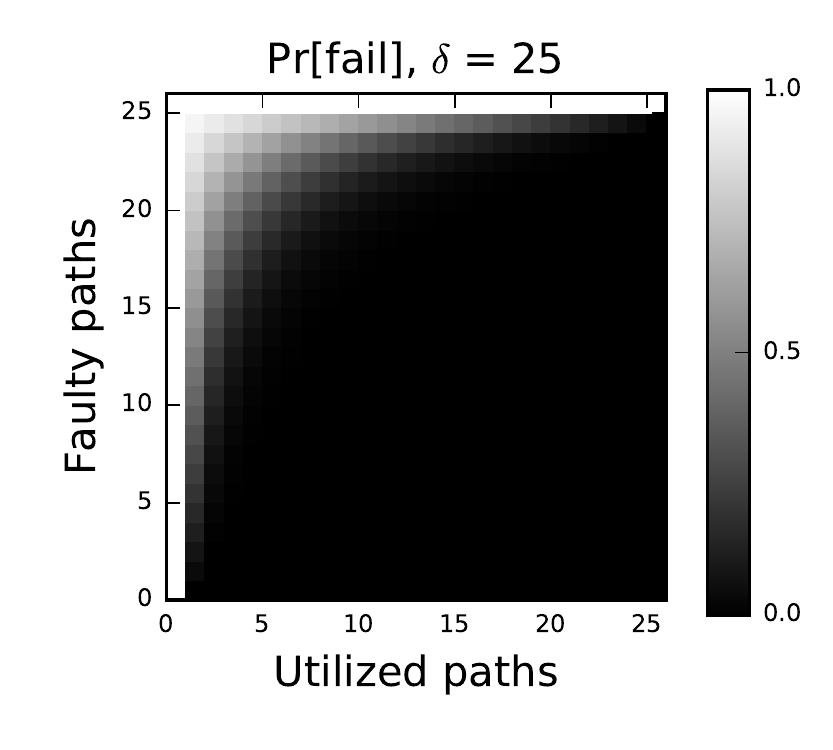}}
\caption{
The probability of an undetectable error as a function of the number of
message copies and the number of adversarial faults.
}
\label{fig:pfail}
\end{figure}

\section{Multipath Butterfly Routing}
\label{sec-butterfly}

In previous sections, we showed that reliable communication across a network
can be achieved even when any single message path might be compromised by
an adversary,
provided the network has sufficient redundancy,
and provided the sender and intermediaries know how to route message
copies along independent paths.
In this section, we address both requirements by proposing a novel routing
algorithm for constructing independent paths on the butterfly network topology.
This architecture and routing algorithm achieve an
effective redundancy that increases exponentially with the trust radius,
allowing a very high level of adversarial fault tolerance.

The structure of the butterfly network is highly constrained,
making it most suitable for applications where portions of the
network structure can be designed or dictated.
Examples of such networks include:
overlay networks \cite{lua_survey_2005, korzun_structured_2013},
formal organizations \cite{mohr_explaining_1982},
government-regulated cellular networks \cite{walker_mass_2012},
and call tree notification systems \cite{nickerson_thinking_2010}.
However, when attack-tolerance is desired,
it will always require control over network structure in order to
eliminate single points of failure.
The regular structure of the butterfly is not a limitation of our approach,
but rather a reflection of the inherent difficulty of attack-tolerance.
Lastly, we note that the partial trust model and multipath fault tolerance
schemes of the previous section do not rely on any particular network
topology or routing algorithm,
and our choice of the butterfly topology is only one of many possible choices.

\subsection{Butterfly Network Topology}

We choose the butterfly topology
\cite{kshemkalyani_distributed_2008}
because of several desirable properties (described below)
and because its structure allows for relatively straightforward
design and analysis of routing algorithms.
While several variations on the butterfly network exist,
we utilize the wrap-around butterfly.
We denote the $m$-dimensional, directed wrap-around butterfly as a
graph $\wbf(m)$:
\beq
\wbf(m) &=& (V, E_\downarrow \cup E_\rightarrow) \\
V &=& \mathbb{Z}_{m} \times \mathbb{Z}_2^m \\
E_\downarrow
&=&
\{((l,z),(l+1 \; (\text{mod } m),z) \} \\
E_\rightarrow
&=&
\{(l,z),(l+1 \; (\text{mod } m),
z \oplus 1_l \},
\eeq
where $\mathbb{Z}_m$ is the set of integers modulo $m$,
$\oplus$ represents componentwise addition modulo 2,
and $1_l$ is a vector with a 1 in index $l$ and 0 elsewhere.
Each node is associated with a level $l$ and an $m$-bit string $z$
known as {\em the place-within-level}.
There are two types of edges: down, and down-right
(shown in Figure \ref{fig:butterfly}).
Down edges ($E_\downarrow$) connect nodes sharing the same $z$ value
in a cycle of increasing level $l$.
Down-right edges ($E_\rightarrow$) also link to a node of level $l + 1$,
but one having the place-within-level equal to $z$ with the $l$th bit inverted.

The wrap-around butterfly network is known to have several of the properties
we desire for scalable, decentralized communication networks:
\begin{description}
\item[Vertex-transitivity:]
Because the wrap-around butterfly is vertex transitive,
it is maximally decentralized;
\item[Small-diameter:]
For any two nodes, the length of the shortest path between them is
$O(\log N)$, where N is the number of nodes in the network;
\item[Sparsity:]
With a constant degree of 4, the wrap-around butterfly is extremely sparse,
and can scale indefinitely without node degree becoming a limitation;
\item [Redundancy:]
Multiple paths exist between any two nodes.
Specifically, we will prove below that the number of independent paths between two
nodes increases exponentially with the trust
radius $h$.
\end{description}

\begin{figure}
\begin{center}
\includegraphics{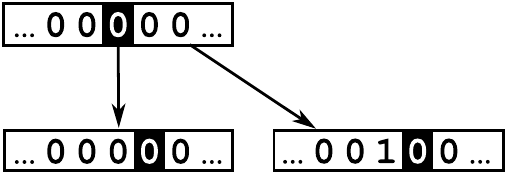}
\end{center}
\caption{
Schematic illustration of the two types of edges in a directed butterfly
network.
The node $(l,z)$ is shown as the bit string $z$ with a square around the
$l$th bit.
``Down'' edges increment $l$, leaving $z$ unchanged,
while ``down-right'' edges increment $l$ and invert the $l$th bit of $z$.
In the wrap-around variant, the nodes with maximum $l$ have down and down-right
edges to the nodes with $l=0$.
\label{fig:butterfly}
}
\end{figure}

The structure of the butterfly network lends itself to a well-known
(unipath) routing algorithm,
which we later extend to the multipath case.
The unipath algorithm first follows a down or down-right edge at every step,
increasing the level $l$ by 1 and cycling through the
indices of the place-within-level.
If the current node's place-within-level matches the destination node's at
index $l$,
a down edge is chosen and the place-within-level does not change.
Otherwise, a down-right edge is chosen and the $l$th component of the
place-within-level is flipped,
after which it matches the destination.
After $m$ iterations of this, all levels have been visited
and the place-within-level matches that of the destination.
Simply following down (or up) edges will then increment (decrement) the
level until the destination node is reached.

\subsection{Multipath Routing Algorithm}

We now present a routing algorithm to construct $2^h$ independent paths
between two nodes in a butterfly network,
where $h$ is the trust radius under the partial trust model.
Informally, Alice sends each message to a distinct node on her
trust boundary, then to a distinct intermediate node in the untrusted region,
then to a distinct node on Bob's trust boundary, and finally to Bob.
The intermediate nodes are in a sense ``far'' from each other and ensure that
no two paths overlap in the untrusted region.
Each path can be parameterized by a single integer $s$, which identifies
the specific node on Alice's trust boundary
(or equivalentely the node on Bob's trust boundary, or the untrusted intermdiate).

The algorithm guarantees paths are independent by ensuring that
(outside the trusted neighborhoods)
they only include
nodes that match the path parameter $s$ at certain indexes in their
place-within-level.
Since each path has a unique parameter $s$,
its set of untrusted nodes is disjoint from all other paths.
As with the unipath routing algorithm,
each of the multiple paths proceed from a source $v$ to a destination $u$
using down and down-right edges,
cycling through levels one at a time.
However, we cycle through the levels twice, once to route from $v$ to a
particular path's intermediary node,
and again to route from the intermediary to $w$.
Each cycle is divided into stages,
with different properties used to prove independence at each stage
(see Figure \ref{fig:route-overview}).
In the first cycle (stages 1--4), path independence is guaranteed by ensuring that
all nodes match the path parameter $s$ in the first $h$ bits of the place-within-level.
Similarly, in the second cycle (stages 5--7),
independence is guaranteed by ensuring that all paths match $s$ in the
$h$ bits of the place-within-level preceding the destination index.
A full example is illustrated in Figure \ref{fig:routing}.

\afterpage{
\begin{table}[h!]
\caption{Butterfly Multipath Routing Variables\label{tab:routing}}
{
\begin{center}
\begin{tabular}{|l|l|}
\hline
NAME & VARIABLE \\\hline
butterfly dimension & $m \in \mathbb{Z}_+$ \\\hline
node level & $l \in \mathbb{Z} : 0 \leq l < m$ \\\hline
node place within level & $z \in \mathbb{Z}_2^m$ \\\hline
trust radius & $h \in \mathbb{Z} : 1 \leq h \leq \lfloor m/2 \rfloor$ \\\hline
path index & $s \in \mathbb{Z}_2^h$ \\\hline
\end{tabular}
\end{center}
}
\end{table}

\begin{figure}[t!]
\begin{center}
\begin{tikzpicture}[
node distance=0pt,
 start chain = A going right,
    X/.style = {rectangle, draw,
                minimum width=10ex, minimum height=3ex,
                outer sep=0pt, on chain},
                        ]
\foreach \i in {0\ldots,{\ldots}0\ldots,{\ldots}0\ldots,{\ldots}0}
    \node[X] {\i};
\draw[<->] ([yshift=1.5mm] A-1.north east) -- node[above=0.25mm] {$h$} ([yshift=1.5mm] A-1.north west);
\draw[<->] ([yshift=1.5mm] A-2.north east) -- node[above=0.25mm] {$l_w - 2h$} ([yshift=1.5mm] A-2.north west);
\draw[<->] ([yshift=1.5mm] A-3.north east) -- node[above=0.25mm] {$h$} ([yshift=1.5mm] A-3.north west);
\draw[<->] ([yshift=1.5mm] A-4.north east) -- node[above=0.25mm] {$m - l_w$} ([yshift=1.5mm] A-4.north west);
\draw ( A-1.west) -- node[left=5ex,minimum width=10ex] {start} ( A-1.west);
\node (B1) [inner sep=1pt,above=of A-1.north,above=5ex] {\underline{A}};
\node (B2) [inner sep=1pt,above=of A-2.north,above=5ex] {\underline{B}};
\node (B3) [inner sep=1pt,above=of A-3.north,above=5ex] {\underline{C}};
\node (B4) [inner sep=1pt,above=of A-4.north,above=5ex] {\underline{D}};
\end{tikzpicture}
\\
\begin{tikzpicture}[
node distance=0pt,
 start chain = A going right,
    X/.style = {rectangle, draw,
                minimum width=10ex, minimum height=3ex,
                outer sep=0pt, on chain},
    Y/.style = {rectangle, draw,
                minimum width=10ex, minimum height=3ex,
                outer sep=0pt, on chain, thick},
                        ]
\node[Y] {$s$};
\foreach \i in {{\ldots}0\ldots,{\ldots}0\ldots,{\ldots}0}
    \node[X] {\i};
\draw ( A-1.west) -- node[left=5ex,minimum width=10ex] {1.} ( A-1.west);
\end{tikzpicture}
\\
\begin{tikzpicture}[
node distance=0pt,
 start chain = A going right,
    X/.style = {rectangle, draw,
                minimum width=10ex, minimum height=3ex,
                outer sep=0pt, on chain},
    Y/.style = {rectangle, draw,
                minimum width=10ex, minimum height=3ex,
                outer sep=0pt, on chain, thick},
                        ]
\foreach \i in {$s$}
    \node[X] {\i};
\node[Y] {{\ldots}1\ldots};
\foreach \i in {{\ldots}0\ldots,{\ldots}0}
    \node[X] {\i};
\draw ( A-1.west) -- node[left=5ex,minimum width=10ex] {2.} ( A-1.west);
\end{tikzpicture}
\\
\begin{tikzpicture}[
node distance=0pt,
 start chain = A going right,
    X/.style = {rectangle, draw,
                minimum width=10ex, minimum height=3ex,
                outer sep=0pt, on chain},
    Y/.style = {rectangle, draw,
                minimum width=10ex, minimum height=3ex,
                outer sep=0pt, on chain, thick},
                        ]
\foreach \i in {$s$,{\ldots}1\ldots}
    \node[X] {\i};
\node[Y] {$\tilde{s}$};
\foreach \i in {{\ldots}0}
    \node[X] {\i};
\draw ( A-1.west) -- node[left=5ex,minimum width=10ex] {3.} ( A-1.west);
\end{tikzpicture}
\\
\begin{tikzpicture}[
node distance=0pt,
 start chain = A going right,
    X/.style = {anchor=base, rectangle, draw,
                minimum width=10ex, minimum height=3ex,
                outer sep=0pt, on chain},
    Y/.style = {rectangle, draw,
                minimum width=10ex, minimum height=3ex,
                outer sep=0pt, on chain, thick},
                        ]
\foreach \i in {$s$,{\ldots}1\ldots,$\tilde{s}$}
    \node[X] {\i};
\node[Y] {$z_{w,D}$};
\draw ( A-1.west) -- node[left=5ex,minimum width=10ex] {4.} ( A-1.west);
\end{tikzpicture}
\\
\begin{tikzpicture}[
node distance=0pt,
 start chain = A going right,
    X/.style = {anchor=base, rectangle, draw,
                minimum width=10ex, minimum height=3ex,
                outer sep=0pt, on chain},
    Y/.style = {rectangle, draw,
                minimum width=10ex, minimum height=3ex,
                outer sep=0pt, on chain, thick},
                        ]
\node[Y] {$z_{w,A}$};
\foreach \i in {{\ldots}1\ldots,$\tilde{s}$,$z_{w,D}$}
    \node[X] {\i};
\draw ( A-1.west) -- node[left=5ex,minimum width=10ex] {5.} ( A-1.west);
\end{tikzpicture}
\\
\begin{tikzpicture}[
node distance=0pt,
 start chain = A going right,
    X/.style = {anchor=base, rectangle, draw,
                minimum width=10ex, minimum height=3ex,
                outer sep=0pt, on chain},
    Y/.style = {rectangle, draw,
                minimum width=10ex, minimum height=3ex,
                outer sep=0pt, on chain, thick},
                        ]
\foreach \i in {$z_{w,A}$}
    \node[X] {\i};
\node[Y] {$z_{w,B}$};
\foreach \i in {$\tilde{s}$,$z_{w,D}$}
    \node[X] {\i};
\draw ( A-1.west) -- node[left=5ex,minimum width=10ex] {6.} ( A-1.west);
\end{tikzpicture}
\\
\begin{tikzpicture}[
node distance=0pt,
 start chain = A going right,
    X/.style = {anchor=base, rectangle, draw,
                minimum width=10ex, minimum height=3ex,
                outer sep=0pt, on chain},
    Y/.style = {rectangle, draw,
                minimum width=10ex, minimum height=3ex,
                outer sep=0pt, on chain, thick},
                        ]
\foreach \i in {$z_{w,A}$,$z_{w,B}$}
    \node[X] {\i};
\node[Y] {$z_{w,C}$};
\foreach \i in {$z_{w,D}$}
    \node[X] {\i};
\draw ( A-1.west) -- node[left=5ex,minimum width=10ex] {7.} ( A-1.west);
\end{tikzpicture}
\end{center}
\caption{
Progression of place-within-level $z$ as the multipath routing algorithm
cycles through the levels of the butterfly network.
\label{fig:route-overview}
}
\begin{center}
\includegraphics{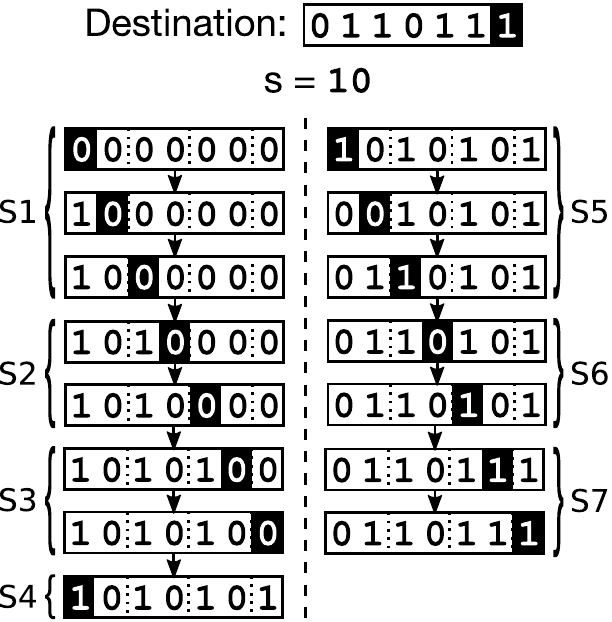}
\end{center}
\caption{
An example of one path as constructed by the proposed multipath
routing algorithm.
The path is shown for $s = 10_2$
and $w = (6, 0110111_2)$.
\label{fig:routing}
}
\end{figure}
}

\subsubsection{Algorithm Specification}

We now begin the formal specification of our multipath routing scheme for the
wrap-around butterfly network.
For convenience, the relevant variables are summarized in Table \ref{tab:routing}.
Utilizing vertex transitivity, we label the source node as
$(l^{(0)}, z^{(0)}) = (0, 0)$ and denote the destination node as $w = (l_w, z_w)$,
without loss of generality.

Let $s$ be an $h$-bit binary string with $s_i$ denoting the bit at index $i$.
There are $2^h$ such strings.
Let $v_s^{(t)} = (l^{(t)}, z^{(t)})$ be the node at position $t$
in the path parameterized by $s$.
For convenience,
we will omit the subscript $s$ when it is obvious from context.
We define three distinct partitions of $m$-bit binary strings.
Let $Q_{v^{(0)}}$ be the set of $m$-bit
strings in which the bits at
all indices $h \leq i < l_w - h$ match those of $z^{(0)}$,
and let $\overline{Q_{v^{(0)}}}$ be its complement.
Note that $Q_{v^{(0)}}$ is trivially all $m$-bit strings if $l_w < 2h$.
Let $R_s$ be the set of $m$-bit strings with the lowest $h$
bits all matching the bits of $s$,
and let $\overline{R_s}$ be its complement.
Let $S_s$ be the set of $m$-bit strings with the $h$ bits
preceding index $l_w$ all matching the bits of $\tilde{s}$,
where $\tilde{s}$ is a cyclic permutation of $s$:
\beq
\tilde{s}_i &=& s_{(i + l_w) \text{ mod } h},
\eeq
and let $\overline{S_s}$ be its complement.
We will make use of the fact that:
\beq
s \neq s^\prime &\implies&
S_s \cap S_{s^\prime} = R_s \cap R_{s^\prime} = \emptyset.
\eeq

Routes are constructed in 7 stages.
The network topology dictates that $l^{(t+1)} = l^{(t)} + 1$ (mod $m$),
so we let $l = t$ (mod $m$).
and that $z^{(t+1)}$ is equal to $z^{(t)}$ with or without the bit in index
$l^{(t)}$ inverted, depending on whether the down or down-right edge was
taken at step $t$.
\begin{description}
\item[Stage 1: ($0 \leq t < h$)]{
Down or down-right edges
are chosen such that the $t$th bit of $z^{(t+1)}$ is equal to the $t$th bit
of $s$.
Throughout Stage 1, all nodes are within the sender's trusted neighborhood.
Throughout Stage 1, $z^{(t)} \in Q_{v^{(0)}}$.
At the end of Stage 1, $z^{(h)} \in S_s$, and $z^{(t)}$ will remain so until the level cycles to $0$ at $t = m$.
}
\item[Stage 2: ($h \leq t < l_w - h$)]{
Edges are chosen to make the $t$th bit of
$z^{(t+1)}$ the inverse of the $t$th bit of $z^{(0)}$.
Note that this stage does not occur when $l_w < 2h$.
If this stage occurs, then $z^{(t)} \in \overline{Q_{v^{(0)}}}$ until these
levels are reached again in stage 6.
}
\item[Stage 3: ($l_w - h \leq t < l_w$)]{
The bits of $z^{(t)}$ are chosen to match $\tilde{s}$,
such that after the stage is complete, $z^{(t)} \in R_s$.
}
\item[Stage 4: ($l_w \leq t < m$)]{
Paths are chosen such that the $t$th bit of $z^{(t+1)}$ matches that of the
destination node $z_w$.
This stage will not occur if $l_w > m - h$.
}
\item[Stage 5: ($m \leq t < m + h$)]{
There are two cases.
If $2h < l_w < m - h$,
then there is no overlap between the indices defining $R_s$ and $S_s$.
In this case, the first $h$ bits of $z^{(t)}$ are set to
match $z_w$.
Otherwise there is some overlap between the indices defining $R_s$ and
$S_s$.
In this case, the each of the first $h$ bits of $z^{(t)}$ is either kept the
same if $l_w - h \leq l < l_w$, or set to the corresponding bit of $z_w$
otherwise.
In this stage and after, $z^{(t)}$ is no longer guaranteed to be in $R_s$.
However, $z^{(t)}$ remains in $S_s$ during and after this stage.
}
\item[Stage 6: ($m + h \leq t < m + l_w - h$)]{
In this stage, edges are chosen to set the bits of $z^{(t)}$ to their
corresponding value in $z_w$.
$z^{(t)} \in \overline{Q_{v^{(0)}}}$ throughout this stage,
but not afterwards.
}
\item[Stage 7: ($m + l_w - h \leq t < m + l_w$)] {
The $h$ bits of $z^{(t)}$
preceding index $l_w$ are set to match $z_w$.
All nodes in this stage are within $h$ hops of $w$ and thus in its trusted
neighborhood.
After this stage, $v^{(m + l_w)} = w$ and routing is complete.
}
\end{description}

\subsubsection{Proof of Path Independence}

\begin{theorem}
Given an $m$-bit wrap-around butterfly network ($m > 1$),
and a radius $h$ ($1 \leq h \leq \left\lfloor \frac{m}{2} \right\rfloor$),
for all node pairs $(v, w)$ such that $d(v,w) \geq 2h$,
there exist $2^h$ paths $v_s$ ($0 \le s < 2^h$) from
$v$ to $w$ such that
$s \neq s^\prime \implies v_s \cap v_{s^\prime} \subset T_h(u) \cup T_h(v)$.
\end{theorem}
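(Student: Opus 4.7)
The plan is to verify the seven-stage construction of the preceding subsection, characterize which of the sets $T_h(v)$, $R_s$, $S_s$, $T_h(w)$ contains each node along a path $v_s$, and then exploit the disjointness identities $R_s \cap R_{s'} = S_s \cap S_{s'} = \emptyset$ for $s \neq s'$ noted in the algorithm setup.

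First, I would verify that the algorithm is well-defined, that each path terminates at $w$, and then establish stage-by-stage invariants for the place-within-level. The butterfly offers exactly two out-edges at each node, and each stage's bit-setting requirement can be satisfied by choosing between the down and down-right edge; tracking the cumulative effect of the seven stages shows that the level advances to $l_w$ and the place-within-level to $z_w$, so $v_s^{(m + l_w)} = w$. Direct inspection of the stage descriptions then yields, for every path $v_s$: Stage 1 stays in $T_h(v)$; Stages 2--4 keep the lowest $h$ bits of $z^{(t)}$ equal to $s$, so $z^{(t)} \in R_s$; Stages 5--6 preserve the $h$ bits of $z^{(t)}$ preceding index $l_w$ (set to $\tilde{s}$ in Stage 3), so $z^{(t)} \in S_s$; and Stage 7 stays in $T_h(w)$. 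In particular, every node of $v_s$ outside $T_h(v) \cup T_h(w)$ satisfies $z \in R_s \cup S_s$.

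Next, I would rule out coincidences between distinct paths outside the trusted neighborhoods. Suppose $(l, z) \in v_s \cap v_{s'}$ with $s \neq s'$ and the node lies outside $T_h(v) \cup T_h(w)$. Each path visits a given level at most twice, once per cycle through the levels. If both visits are in the first cycle, then $z \in R_s \cap R_{s'} = \emptyset$; if both are in the second cycle, then $z \in S_s \cap S_{s'} = \emptyset$; both are contradictions. For the cross-cycle case, inspection of the stage-level ranges shows the only possible overlap is between Stage 2 of one path and Stage 6 of the other at some level $h \leq l < l_w - h$; there the bit of $z$ at position $l$ is still $0$ in Stage 2 (the algorithm has not yet flipped it) but still $1$ in Stage 6 (it was set to $1$ by the other path's Stage 2 and not yet overwritten), so the two $z$ values cannot agree.

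The main obstacle is this cross-cycle case. The bit-level comparison is a one-line observation once the stages are set up, but one must carefully enumerate which pairs of stages can share a level and handle the degenerate configurations --- small $l_w$ where Stage 2 or Stage 6 vanishes, or where the index sets defining $R_s$ and $S_s$ overlap (which triggers the branched rule in Stage 5). These edge cases follow the same pattern and are supported by the hypothesis $d(v, w) \geq 2h$ in the theorem.
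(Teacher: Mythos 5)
Your proposal is correct and follows essentially the same route as the paper's proof: reduce to same-level collisions (same stage or four stages apart), eliminate same-cycle collisions via the pairwise disjointness of the $R_s$ (respectively $S_s$) families, and eliminate the stage-2/stage-6 cross-cycle case by observing that at the shared level the relevant bit has not yet been flipped in one path but has been flipped and not yet restored in the other. The only divergence is in the remaining cross-cycle cases: you dismiss $(1,5)$ and $(3,7)$ by noting that stage-1 and stage-7 nodes already lie in $T_h(v)$ and $T_h(w)$ respectively, so such coincidences are permitted by the statement, whereas the paper eliminates $(1,5)$ with the $Q_{v^{(0)}}$ argument and handles $(3,7)$ by rerouting the single offending path directly from stage 2 to stage 7 --- both dispositions are valid, and yours is arguably the more economical.
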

\begin{proof}
Nodes from two paths can only coincide if their levels are the same.
Nodes which share a level must either be in the same stage, or 4 stages
apart.
Let ($a$,$a^\prime$) denote a pair of sub-paths corresponding to stage $a$ of
one path and stage $a^\prime$ of another.
Excluding paths that intersect in their trusted neighborhoods, (1,1) and (7,7),
we have reduced the list of possible intersections to the following cases:
(2,2), (3,3), (4,4), (5,5), (6,6), (1,5), (2,6), and (3,7).
Nodes in stages 2--4 belong to $R_s$ so cannot overlap with any stage 2--4
nodes from another path, eliminating (2,2), (3,3), and (4,4).
Similarly, nodes in stages 4--6 belong to a unique $S_s$,
eliminating (5,5) and (6,6).
Nodes in stage 1 belong to $Q_{v^{(0)}}$ while those in stage 5 belong in
its complement, eliminating (1,5).
Similarly, for all $l$ in stage 2, $z^{(l)}$ is equal to $z^{(0)}$,
while in stage 6, $z^{(l)}$ is the inverse, eliminating (2,6)
This leaves only (3,7), a collision which can occur only for only one path
(with $s$ matching the first $h$ bits of $z_w$), and which enters the trusted
neighborhood in stage 3.
For this single path, we can proceed directly from stage 2 to stage 7,
eliminating the last possible collision.
\end{proof}

Thus, assuming the partial trust model with trust transitive
for $h$ hops, we can construct $2^h$ paths on a wrap-around butterfly topology
which do not intersect outside the trusted neighborhoods of the source and
destination.
Note that the node sequence $v_s^{(t)}$ can be calculated entirely
from the source $v$, destination $w$, and path parameter $s$,
meaning that with this information nodes are able to determine which neighbor
to route a given message copy to.
Furthermore, the existence of $2^h$ paths places a lower bound on the
effective redundancy $\delta_h$,
showing that the decentralized, redundant, structured networks such as the
butterfly can have a very low probability of failure when faced with
adversarial faults, even from a very powerful attacker.

\section{Discussion}
\label{sec-discussion}

While decentralized protocols have received much attention for their potential
fault tolerance applications,
centralized structures are always vulnerable to exploitation by non-technical means (i.e., coercion),
and there is a need for a better understanding of the relationship between network
structure and attack-tolerance.
We have proposed a network-based scheme for {\em adversarial} fault tolerance on the
butterfly topology,
utilizing a novel concurrent multipath routing algorithm.
We have also demonstrated how {\em partial trust transitivity},
in addition to being more realistic than infinite transitivity,
provides a theoretical foundation for
quantitative analysis of the relationship between trust,
network structure, and attack tolerance.

Such attacks include many forms of censorship and surveillance,
which have important social implications.
We have already discussed two such cases:
Pakistan's inadvertant censorship of YouTube
\cite{hunter_pakistan_2008}
and the FBI's surveillance-turned-censorship of Lavabit
\cite{poulsen_edward_2013}.
The reader may wonder how our methods could be employed in scenarios
such as large-scale state-sponsored censorship.
Censorship-resistant infrastructure often replaces central servers
(e.g., the router in the 2008 YouTube incident) with multiple servers across
the world, synchronized through consensus protocols.
The {\em directory authorities} used by the Tor project
\cite{dingledine_tor:_2004} are one example.
However, the size of such authority networks is often limited by the number of
trusted relationships (degree) each node can maintain, and the inherent insecurity of
extending transitive trust to an ever-larger network.
Our work fills a much-needed gap by quantifying the connection between network-structure,
trust transitivity,
and attack-tolerance.
We provide both a theoretical framework and specific example of how network structure
can be engineered to leverage trust for a high level of attack-tolerance,
without sacrificing scalability.

Fault-tolerant network infrastructures have many direct applications.
Areas such as cryptocurrency
\cite{mazieres_stellar_2015,nakamoto_bitcoin:_2008,king_ppcoin:_2012},
secure multiparty computation
\cite{yao_protocols_1982,chaum_multiparty_1988,goldreich_how_1987},
and wireless sensor networks
\cite{khiani_comparative_2013}
have immediate need for scalable, fault-tolerant infrastructures.
Many Internet services (e.g., email, social networks, cloud storage)
are still highly centralized and vulnerable to technical and
non-technical (i.e., coercive) attacks.
Fault tolerance using {\em both}
decentralized protocols and
decentralized network structures
is one promising approach to securing these services.

We have focused primarily on adversarial faults that block or
change messages (e.g., censorship).
Existing cryptographic techniques for circumventing surveillance
are relatively mature compared to those for tolerating censorship.
However, the techniques presented in this paper
are entirely compatible with, and in some cases could enhance, existing
anti-surveillance techniques.
For example, {\em man-in-the-middle} attacks exploit a privileged
network position to attack otherwise secure cryptography,
suggesting that structural approaches can complement cryptographic ones.

While our present proposal is specific to the butterfly topology,
the multipath fault tolerance scheme could
be applied to any network that has both sufficient redundancy and a
routing algorithm to discover independent paths.
For general networks, finding all such paths is NP-hard,
but efficient, suboptimal algorithms exist
\cite{reiter_resilient_1998}.
However, we have argued that attack-tolerance requires the ability to
influence network structure and reduce reilance on single points of failure.
Our work is most applicable to cases where the need for attack-tolerance
justifies investment in deliberate infrastructure.
For example, a coalition of groups supporting free expression could use our work
to construct a censorship-resistant communication network.
In general, such groups would need to invest resources into vetting their neighbors
to establish trust, but there are scenarios in which the attack-tolerance
requirement would justify that investment.
It is also worth noting that because faulty paths can be identified in our scheme,
it may at times be appropriate to begin by assuming mutual trust and
revoking that relationship if it is violated.
Any entities dependent on the proper functioning of the network would have an incentive
to resist attack in order to maintain their ability to participate.

Our work suggests several directions for future work towards developing
practical, attack-tolerant communication infrastructure.
The development of new multipath routing algorithms on other structured
networks could achieve higher levels of redundancy.
It is also desirable to identify dynamics that give
rise to structured networks,
and to evaluate whether our results can be generalized to unstructured or
approximately structured networks.
Finally, these results could be implemented to address specific
applications, e.g., secure messaging, domain name resolution, or
anonymous web browsing.

\section{Conclusion}
\label{sec-conclusion}

We have presented a novel concurrent multipath routing algorithm for the butterfly
topology,
as well as a scheme for using this algorithm to construct a highly attack-tolerant
virtual channel between any two nodes, even when no fully-trusted path exists
between them.
Under this scheme, the probability of an adversary causing an undetectable error
decreases exponentially with the network's effective redundancy.
The effective redundancy, in the case of the butterfly topology,
grows exponentially with the trust radius.
Furthermore, a small increase in the number of messages sent can compensate
for a large increase in the number of messages compromised by an adversary.
We have also demonstrated how the assumption of partial trust
transitivity can enable a quantitative analysis of the
relationships between network structure, trust, and attack-tolerance.
These results are directly applicable to systems in which the link
structure can be imposed by the designer,
and more generally, provide a theoretical foudnation that can be used more to evaluate
the role of network structure, trust transitivity, and effective redundancy
on attack-tolerance.

\section{Acknowledgments}
The authors would like to thank Tony Garnock-Jones, A. Frederick Dudley, and
Nathaniel Bezanson for helpful conversations.
This research was partly supported by the National Science Foundation under Grant No. IIS-1617820.

\bibliographystyle{abbrv}
\bibliography{preprint} 
\end{document}